\newcommand{\ket}[1]{\left | \, #1 \right \rangle}
\newcommand{\bra}[1]{\left \langle{} #1 \, \right |}
\newcommand{\norm}[1]{\left|\left|#1\right|\right|}
\newtheorem{theorem}{Theorem}
\newtheorem{lemma}{Lemma}
\newtheorem{corollary}{Corollary}
\begin{document}
\title{Self-testing in parallel with CHSH}
\author{Matthew McKague}
\affiliation{Department of Electrical Engineering and Computer Science, Queensland University of Technology}
\email{m.mckague@qut.edu.au}

\maketitle
\begin{abstract}
Self-testing allows classical referees to verify the quantum behaviour of some untrusted devices. Recently we developed a framework for building large self-tests by repeating a smaller self-test many times in parallel.  However, the framework did not apply to the CHSH test, which tests a maximally entangled pair of qubits. CHSH is the most well known and widely used test of this type.  Here we extend the parallel self-testing framework to build parallel CHSH self-tests for any number of pairs of maximally entangled qubits.  Our construction achieves an error bound which is polynomial in the number of tested qubit pairs.
\end{abstract}

\section{Introduction}
The most basic self-testing problem is this: given two non-communicating quantum devices, verify through classical interaction alone that they share a maximally entangled pair of qubits.  This problem can also be phrased in the language of non-local games.  In a non-local game two non-communicating players\footnote{Non-local games for more than two players are also possible.}, Alice and Bob, receive questions from a referee and attempt to win the game by providing answers which, together with the questions, satisfy some predicate.  For non-local games, self-testing says that for Alice and Bob to approach the optimal quantum probability of winning, their strategy must approach some particular ideal quantum strategy.

While it is interesting that it is possible to self-test a single pair of maximally entangled qubits, many more applications are possible if many pairs can be tested.  This motivates the idea of repeated self-tests, whether in series (as in~\cite{Reichardt2013:Classicalcommandofquantumsystems}) or parallel (as in~\cite{McKague:2015:Selftestingin}), to test many pairs of maximally entangled qubits.  From the players' perspective, strategies for serial repetition can be converted to parallel repetition but not the other way around.  Hence parallel self-testing results are more general and their robustness bounds also apply to serial repetition.

The CHSH game~\cite{Clauser:1969:Proposed-Experi} is arguably the most well-known non-local game and has seen wide application.  It is distinguished by its simplicity, with only single-bit questions and answers for both Alice and Bob.  Because of this simplicity, the CHSH game is widely used, both in theory and experiment.  Hence it is a natural candidate for use with parallel repetition.

An important distinction to make is between self-tests where the number of players is kept at two (as in~\cite{Reichardt2013:Classicalcommandofquantumsystems}), with self-tests where two new players are added for each copy of the self-test, (as in~\cite{Magniez:2006:Self-testing-of}).  Every additional player introduces either additional assumptions, or restrictions on potential experiments.  Hence the two player scenario is more general and preferred.  We will be considering the two-player scenario here.

\subsection{Previous work}
The history of self-testing using CHSH goes back to Popescu and Rohrlich~\cite{Popescu:1992:Which-states-vi} who characterised the optimal quantum strategies for CHSH.\@  Although the terminology and applications of self-testing were later introduced by Mayers and Yao~\cite{Mayers:2004:Self-testing-qu}.  Robust proofs of self-testing with CHSH appeared with Bardyn et al.~\cite{Bardyn:2009:Device-independ} and McKague et al.~\cite{McKague:2012:Robust-self-tes}.

As a non-local game, parallel repetition of CHSH was considered by Cleve et al.~\cite{Cleve:2008:Perfect-Paralle}, although they did not consider self-testing.  Reichardt et al.~\cite{Reichardt2013:Classicalcommandofquantumsystems} looked at serial repetition of CHSH for self-testing.  The first parallel self-testing result using CHSH is due to Wu et al.~\cite{Wu2015:Deviceindependentself} who used semi-definite programming and numerical methods to give robust error bounds for self-testing with two simultaneous games of CHSH.\@  They also give an analytic proof of self-testing for the ideal case, again for two copies of CHSH.\@

Self-testing using parallel repetition with more than two repetitions first appeared in~\cite{McKague:2015:Selftestingin}, which used the Mayers-Yao test~\cite{Mayers:2004:Self-testing-qu} and another CHSH-like self-test.  However, the robustness bound of the latter test scaled exponentially with the amount of repetitions.  Self-testing using parallel repetition of CHSH was shown by Coladangelo\footnote{
We first became aware of Coladangelo's work after an early version of this manuscript was posted to the arXiv.
} in~\cite{Coladangelo:2016:Parallelselftesting}.

\subsection{Contributions}

Our main contribution here is to give a self-testing construction for parallel repetition of the CHSH game.  Informally, we prove that if Alice and Bob have a strategy for playing $\frac{n}{2}$ rounds\footnote{i.e.\ for self-testing a state on $n$ qubits} of CHSH\ in parallel that achieves an average CHSH\ value of at least $2 \sqrt{2} - \epsilon$ (i.e.\ they are $\epsilon$ close to the quantum bound) then their state is within $O(n^{\frac{9}{8}} \epsilon^{\frac{1}{8}})$ of $\frac{n}{2}$ pairs of maximally entangled qubits.

As a step in our construction we also show that it is possible to self-test many maximally entangled pairs of qubits using a modified parallel repetition of CHSH with only a logarithmic number of questions (measurement settings).  This is analogous to a similar result built on the Mayers-Yao test in~\cite{McKague:2015:Selftestingin}.

In order to achieve our construction we slightly generalise the parallel self-testing framework given in~\cite{McKague:2015:Selftestingin}.  Our construction is in some ways a generalisation of the analytic proof of self-testing for two copies of CHSH given in~\cite{Wu2015:Deviceindependentself}.

\section{A framework for self-testing qubits}
Before addressing our particular self-test, we will recall and slightly modify the self-testing framework introduced in~\cite{McKague:2015:Selftestingin}.

\subsection{Technical preliminaries}

We define $1_k$ to be the $n$-bit string which is 1 in the $k$-th position and 0 everywhere else.  For $x$ an $n$-bit string, let $|x|$ be the number of 1's in $x$ (the Hamming weight).  Further, when $n$ is even, define $x_a$ to be the first $\frac{n}{2}$ bits of $x$. The string $x_b$ is the last $\frac{n}{2}$ bits of $x$.  Later, we will divide $x$ between Alice and Bob so $x_a$ represents $x$ on Alice's side, and $x_b$ is for Bob's side.  For strings $x$ and $y$ of the same length, $x \cdot y = \sum_{j} x_j y_j$ is the inner product of $x$ and $y$.  Finally, $xy$ is the concatenation of $x$ and $y$.

If we have operators $M_1 \dots M_n$ then for a bit string $t \in {\{0,1\}}^n$ we define
\begin{equation}
M^t := \prod_{k=1}^n M_k^{t_k}.
\end{equation}
The order of the product is important since we do not know whether the operators all commute.  Hence we will make the convention that the index increases from left to right.

\subsection{Sufficient conditions for self-testing}

In~\cite{McKague:2015:Selftestingin} we developed a framework for self-testing states of many qubits, and in particular derived sufficient conditions for self-testing many pairs of maximally entangled qubits.  Unfortunately, the framework in~\cite{McKague:2015:Selftestingin} is not general enough for our purposes since the sufficient conditions require the existence of a large set of commuting operators, but we will be unable to provide them here.  In this section we will modify our previous work to provide a set of sufficient conditions which requires a smaller set of commuting operators.

To begin with, we recall Lemma~4 from~\cite{McKague:2015:Selftestingin}, which gives a very general set of sufficient conditions for self-testing a many qubit state.  Here we present a version which is restricted to many pairs of maximally entangled qubits, rather than the more general version in~\cite{McKague:2015:Selftestingin} which also applies to graph states.

\begin{lemma}[McKague~\cite{McKague:2015:Selftestingin}, Lemma 4]
\label{lemma:graphstateselftestconditions}
Let $\ket{\psi}$ be the $n$-qubit state
\begin{equation}
\ket{\psi} =
\frac{1}{\sqrt{2^n}}
\sum_{u_a, u_b \in {\{0,1\}}^{n/2} } {(-1)}^{u_a \cdot u_b} \ket{u}.
\end{equation}
Further suppose that $\ket{\psi^{\prime}}$ is a normalised state in a Hilbert space $\mathcal{H}_{\mathcal{A}}$, and ${\{X^{\prime}_{j}\}}_{j=1}^n$, ${\{Z^{\prime}_{j}\}}_{j=1}^n$ are unitary, Hermitian operators on $\mathcal{H}_{\mathcal{A}}$ such that for any $s,t \in {\{0,1\}}^{n}$
\begin{equation}
\label{eq:XZanticommute9}
	\norm{
		Z^{\prime t} X^{\prime s} \ket{\psi^{\prime}}
		-
		{(-1)}^{s \cdot t}
		X^{\prime s} Z^{\prime t} \ket{\psi^{\prime}}
	}
	\leq
    \epsilon
\end{equation}
and
\begin{equation}
\label{eq:XZswapgeneral9}
	\norm{
		Z^{\prime s_b s_a} \ket{\psi^{\prime}}
		-
		{(-1)}^{s_{a} \cdot s_b}
		X^{\prime s} \ket{\psi^{\prime}}
	}
	\leq
    \epsilon.
\end{equation}
Then there exists an isometry $\Phi$ and a state $\ket{junk}$ such that for any $p,q \in {(0,1)}^{n}$
\begin{equation}
    \norm{
        \Phi(
            X^{\prime q}
            Z^{\prime p}
            \ket{\psi^{\prime}}
        )
        -
        \ket{junk}
        X^q
        Z^p
        \ket{\psi}
    }
    \leq
    O(\sqrt{\epsilon}).
\end{equation}
\end{lemma}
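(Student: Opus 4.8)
The plan is to prove the lemma by the standard \emph{swap isometry} technique: construct an explicit $\Phi$, show it extracts the target state $\ket{\psi}$ from $\ket{\psi^\prime}$, and show it intertwines the primed operators with genuine Pauli operators on the extracted register. I would define $\Phi$ by adjoining $n$ fresh ancilla qubits, each prepared in $\ket{0}$, and for each index $k$ applying the single-qubit gadget consisting of a Hadamard on ancilla $k$, a controlled-$Z^\prime_k$ (ancilla $k$ controlling $Z^\prime_k$ on $\mathcal{H}_{\mathcal{A}}$), a second Hadamard, and a controlled-$X^\prime_k$. This is designed so that, after the swap, $X^\prime_k$ acts on ancilla $k$ like a logical Pauli $X_k$ and $Z^\prime_k$ like a logical $Z_k$. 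Since the $X^\prime_k, Z^\prime_k$ need not commute, I would fix the same left-to-right ordering convention adopted for $M^t$ so that the gadgets are applied in a definite order.

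I would then reduce the general statement to two sub-claims: (i) a base case $\norm{\Phi(\ket{\psi^\prime}) - \ket{junk} \ket{\psi}} = O(\sqrt{\epsilon})$, and (ii) an intertwining property, namely that $\Phi$ approximately carries $X^\prime_k$ to $I \otimes X_k$ and $Z^\prime_k$ to $I \otimes Z_k$ on the appropriate family of states, with per-operator error $O(\epsilon)$. Given these, the full claim for arbitrary bit strings $p,q$ follows by moving $X^{\prime q} Z^{\prime p}$ through $\Phi$ one block at a time and then applying the base case. Property (ii) I would establish by commuting the logical operator past each controlled gate of $\Phi$, using the approximate anticommutation relation~\eqref{eq:XZanticommute9} to produce the correct signs. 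A useful feature is that~\eqref{eq:XZanticommute9} is stated for whole strings $s,t$ at once, so $X^{\prime q}$ and $Z^{\prime p}$ can be pushed through in a bounded number of applications rather than generator by generator, which limits error growth.

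For the base case (i), expanding $\Phi$ on $\ket{\psi^\prime}$ produces uniform ancilla superpositions created by the Hadamards, with $Z^\prime$ and $X^\prime$ strings imprinted on $\mathcal{H}_{\mathcal{A}}$ conditioned on the ancilla labels $u_a, u_b$. The swap relation~\eqref{eq:XZswapgeneral9}, which equates $Z^{\prime s_b s_a}\ket{\psi^\prime}$ with $(-1)^{s_a \cdot s_b} X^{\prime s}\ket{\psi^\prime}$, is exactly what is needed to collapse these imprinted operators, reproduce the $(-1)^{u_a \cdot u_b}$ phases in the definition of $\ket{\psi}$, and factor the residual physical state out as $\ket{junk}$.

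The main obstacle is the error bookkeeping. Each use of~\eqref{eq:XZanticommute9} or~\eqref{eq:XZswapgeneral9} is guaranteed valid only on $\ket{\psi^\prime}$ (or a Pauli image of it), so I must verify that every intermediate state stays within $O(\epsilon)$ of such a state before a relation is invoked; because all operators in play are unitary, the per-step discrepancies combine additively by the triangle inequality, with the $n$-dependence absorbed into the implicit constant. The square root in the final bound is the genuine $\epsilon$-scaling: it arises when converting the additive first-order estimates (which control real parts of inner products such as $\mathrm{Re}\,\braket{\psi^\prime}{U^\dagger V \, \psi^\prime}$) into state-distance bounds via $\norm{(U-V)\ket{\psi^\prime}}^2 = 2 - 2\,\mathrm{Re}\,\braket{\psi^\prime}{U^\dagger V\,\psi^\prime}$, and I would not expect to improve past $O(\sqrt{\epsilon})$ here.
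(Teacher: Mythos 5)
You have correctly identified the construction: this lemma is not proved anywhere in the present paper (it is imported from Lemma~4 of the cited prior work, specialised to the perfect-matching graph), and the proof there is indeed the swap isometry you describe (Hadamard, controlled-$Z^{\prime}_k$, Hadamard, controlled-$X^{\prime}_k$ on each of $n$ ancillas), with the square root entering exactly where you say, in converting a fidelity-type estimate into a norm bound. So the architecture of your proposal matches the source. The problem is in the execution plan, where there are two genuine gaps.

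First, your intertwining and ``collapse'' steps rely on invoking \eqref{eq:XZanticommute9} and \eqref{eq:XZswapgeneral9} to move operators through the gadgets, but both relations hold only with $\ket{\psi^{\prime}}$ itself immediately to the right, and neither allows you to merge or reorder two strings of the \emph{same} type. The $X^{\prime}_j$ are not assumed to pairwise commute, nor are the $Z^{\prime}_j$ (this is precisely the point of the present paper's framework: only Alice's half commutes, cf.\ Lemma~\ref{lemma:acandxz}), and for non-commuting operators even ${(X^{\prime s})}^2 \neq I$. Expanding $\Phi(X^{\prime q} Z^{\prime p} \ket{\psi^{\prime}})$ produces words $X^{\prime v} Z^{\prime u} X^{\prime q} Z^{\prime p} \ket{\psi^{\prime}}$, and every admissible (innermost) application of a hypothesis creates an $XX$ or $ZZ$ adjacency --- e.g.\ replacing $Z^{\prime p}\ket{\psi^{\prime}}$ by ${(-1)}^{p_a \cdot p_b} X^{\prime p_b p_a}\ket{\psi^{\prime}}$ leaves $X^{\prime q} X^{\prime p_b p_a}\ket{\psi^{\prime}}$ --- which no hypothesis can touch. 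So ``commuting the logical operator past each controlled gate'' is not a legal move; the proof must be sequenced so that every invocation acts on the operators directly adjacent to $\ket{\psi^{\prime}}$ and so that same-type products are never required to cancel (they do cancel only in the palindromic, Gram-matrix-type combinations the real proof is organised around). This ordering discipline is the actual technical content, and it is the same issue the paper belabours in the proof of Lemma~\ref{lemma:acandxz}.

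Second, the error bookkeeping is wrong in a way that matters downstream. Applying the relations term by term in the expanded sum and combining ``additively by the triangle inequality'' gives a factor exponential in $n$ (there are $4^n$ terms, each with error $\epsilon$ weighted by $2^{-n}$, and quadrature over the $2^n$ ancilla branches then leaves $2^{n/2}\epsilon$). The correct accounting invokes each string-form hypothesis \emph{simultaneously across the orthogonal, uniform-amplitude branches} --- i.e.\ at the stage of the circuit where the state is $2^{-n/2}\sum_u \ket{u} \otimes Z^{\prime u}(\cdots)\ket{\psi^{\prime}}$ --- so that a uniform per-branch error $\epsilon$ costs $\epsilon$ in total, with only a constant number of such invocations. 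Moreover the $n$-dependence cannot be ``absorbed into the implicit constant'': the bound here must be $O(\sqrt{\epsilon})$ with a constant independent of $n$, since Corollary~\ref{cor:2partysufficientconditions}'s $O(n\sqrt{\epsilon})$ and the main theorem's $O(n^{\frac{9}{8}}\epsilon^{\frac{1}{8}})$ are obtained by substituting the $O(n^2\epsilon)$ output of Lemma~\ref{lemma:acandxz} into an $n$-free $O(\sqrt{\epsilon})$; an $n$-dependent constant here would invalidate those bounds.
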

The ideal state $\ket{\psi}$ is the graph state corresponding to $\frac{n}{2}$ isolated edges and is hence $\frac{n}{2}$ pairs of maximally entangled states.

The conditions for this lemma involve products of many operators.  In general we will not be able to directly deduce properties of such complex products.  Hence we will derive a less general set of conditions which are easier to meet from the limited information provided by measurement outcomes.  In~\cite{McKague:2015:Selftestingin} this step is provided by Lemma 5.  However, the conditions used are too restrictive to be applicable in our case.  Hence we will prove a less restrictive version of this lemma.

\begin{lemma}
\label{lemma:acandxz}
Suppose that $n$ is even and
\begin{enumerate}
\item $\ket{\psi^{\prime}} \in \mathcal{H}_A \otimes \mathcal{H}_B$ is a state
\item ${\{X^{\prime}_{k}\}}_{k=1}^{n/2}$  are pair-wise commuting operators on $\mathcal{H}_A$
\item ${\{Z^{\prime}_k\}}_{x=1}^{\frac{n}{2}}$  are pair-wise commuting operators on $\mathcal{H}_A$
\item ${\{X^{\prime}_k\}}_{k=\frac{n}{2} + 1}^{n}$  are operators on $\mathcal{H}_B$
\item ${\{Z^{\prime}_k\}}_{k=\frac{n}{2} + 1}^{n}$  are operators on $\mathcal{H}_B$
\end{enumerate}
such that for all $k \neq \ell$
\begin{eqnarray}
	\norm{X^{\prime}_k}_\infty , \norm{Z^{\prime}_k}_\infty & \leq & 1
\\
\label{eq:XZapproxcommute2old}
	\norm{
		X^{\prime}_k Z^{\prime}_\ell \ket{\psi^{\prime}}
		-
		Z^{\prime}_\ell X^{\prime}_k \ket{\psi^{\prime}}
	}
	& \leq & \epsilon_1
\\
\label{eq:XnearZ2}
	\norm{
		X^{\prime}_k \ket{\psi^{\prime}}
		-
		Z^{\prime}_{k + \frac{n}{2}} \ket{\psi^{\prime}}
	}
	& \leq & \epsilon_2
\\
\label{eq:XZanticommute2old}
	\norm{
		Z^{\prime}_k X^{\prime}_k \ket{\psi^{\prime}}
		+
		X^{\prime}_k Z^{\prime}_k \ket{\psi^{\prime}}
	}
	& \leq & \epsilon_3
\end{eqnarray}
where $k+\frac{n}{2}$ is taken modulo $n$. Then
\begin{enumerate}
\item
for any $s,t \in {\{0,1\}}^{n}$
\begin{equation}
\label{eq:XZanticommutegeneral}
	\norm{
		Z^{\prime t} X^{\prime s} \ket{\psi^{\prime}}
		-
		{(-1)}^{s \cdot t}
		X^{\prime s} Z^{\prime t} \ket{\psi^{\prime}}
	}
	\leq
   O(n^2 \epsilon)
\end{equation}
\item
for any $s \in {\{0,1\}}^{n}$
\begin{equation}
	\norm{
        Z^{\prime s_b s_a} \ket{\psi^{\prime}}
    -
        {(-1)}^{s_a \cdot s_b}
        X^{\prime s} \ket{\psi^{\prime}}
	}
	\leq O(n^2\epsilon)
\end{equation}
\end{enumerate}
where $\epsilon = \max \{ \epsilon_1, \epsilon_2, \epsilon_3\}$.
\end{lemma}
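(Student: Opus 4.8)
The plan is to bootstrap both conclusions from the pairwise hypotheses using one recurring device. The approximate equalities~(\ref{eq:XnearZ2}), $\norm{X^{\prime}_k\ket{\psi^{\prime}} - Z^{\prime}_{k+n/2}\ket{\psi^{\prime}}}\leq\epsilon_2$, together with their $X\leftrightarrow Z$ mirror $\norm{Z^{\prime}_k\ket{\psi^{\prime}} - X^{\prime}_{k+n/2}\ket{\psi^{\prime}}}\leq\epsilon_2$ (which follows by applying~(\ref{eq:XnearZ2}) at index $k+n/2$, indices mod $n$), let me replace an operator acting on $\ket{\psi^{\prime}}$ by the corresponding operator on the \emph{other} party. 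Since operators on $\mathcal{H}_A$ commute exactly with operators on $\mathcal{H}_B$, this conversion turns an operator that only approximately commutes with its neighbours into one that commutes exactly. I would first record the elementary fact that the hypotheses, which hold only on the vector $\ket{\psi^{\prime}}$, survive \emph{dressing}: if $\norm{(A-B)\ket{\psi^{\prime}}}\leq\delta$ and $C$ is a product of the $X^{\prime}_k,Z^{\prime}_k$ with $\norm{C}_\infty\leq1$ that commutes exactly with $A-B$, then $\norm{(A-B)C\ket{\psi^{\prime}}}\leq\delta$. This is what makes the errors accumulate only additively.

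The heart of the argument is a single-operator lemma, for example $\norm{Z^{\prime}_\ell X^{\prime s}\ket{\psi^{\prime}} - (-1)^{s_\ell}X^{\prime s}Z^{\prime}_\ell\ket{\psi^{\prime}}}\leq O(n\epsilon)$ with $\ell$ and $s$ supported on Alice's indices, together with its three mirror images (a single $X$ past a $Z$-product, and the two Bob-side versions). I would prove it by \emph{convert--commute--convert}: first replace the whole product $X^{\prime s}\ket{\psi^{\prime}}$ by the matching product of Bob operators $Z^{\prime}_{j+n/2}$, peeling off one factor at a time from the end nearest $\ket{\psi^{\prime}}$ (each replacement costs $\epsilon_2$, and the freed opposite-party operator slides out of the way by exact commutation); then pass the lone $Z^{\prime}_\ell$ through this opposite-party product for free; and finally convert back. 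The sign $(-1)^{s_\ell}$ is produced entirely in the conversion-back step, where converting $Z^{\prime}_{j+n/2}$ next to the single $Z^{\prime}_\ell$ invokes the approximate commutation~(\ref{eq:XZapproxcommute2old}) when $j\neq\ell$ and the anticommutation~(\ref{eq:XZanticommute2old}) when $j=\ell$. The decisive point is that throughout this step only a \emph{single} same-party operator besides the one being converted is ever present, and it commutes exactly with every opposite-party operator, so each approximate relation is applied with its operators sitting directly on $\ket{\psi^{\prime}}$ rather than buried in the middle of a product.

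I would then assemble the two global conclusions. Using exact $A$--$B$ commutation, both $Z^{\prime t}X^{\prime s}\ket{\psi^{\prime}}$ and $X^{\prime s}Z^{\prime t}\ket{\psi^{\prime}}$ factor into an Alice block times a Bob block, reducing the first conclusion to a per-block relation of the form $Z_A X_A\ket{\psi^{\prime}}\approx(-1)^{s_a\cdot t_a}X_A Z_A\ket{\psi^{\prime}}$ and its Bob analogue. Each block is handled by the same convert--commute--convert scheme: send the $X$-block to the opposite party, commute it past the $Z$-block for free, and convert it back one factor at a time, each conversion invoking the single-operator lemma and contributing a factor $(-1)^{t_j}$, so that summing over the factors yields the exponent $s_a\cdot t_a$. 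The second conclusion is proved the same way, converting every $X^{\prime}_k$ into its partner and collecting the overall sign $(-1)^{s_a\cdot s_b}$. Here I would take care with product ordering: converted Alice $Z$ operators commute and may be rearranged freely, whereas converted Bob $Z$ operators do not, so the conversions must be carried out from whichever end produces the Bob factors already in the increasing-index order demanded by $Z^{\prime s_b s_a}$.

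The error bookkeeping is then routine: each block performs $O(|s_a|)$ conversions and each conversion-back calls the single-operator lemma at cost $O(|t_a|\epsilon)$, giving $O(|s_a||t_a|\epsilon)=O(n^2\epsilon)$ per block and hence overall. I expect the main obstacle, and the reason the more restrictive hypotheses of Lemma~5 of~\cite{McKague:2015:Selftestingin} are avoided, to be precisely that the commutation data holds only on $\ket{\psi^{\prime}}$: a naive term-by-term telescoping of $Z^{\prime t}X^{\prime s}$ would apply an approximate relation in the middle of a long operator product, where it is simply false. The conversion device is what keeps every approximate relation anchored to $\ket{\psi^{\prime}}$, and it simultaneously neutralises the non-commutativity of Bob's operators, since at most one un-converted Bob operator is ever in play at a time.
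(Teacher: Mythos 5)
Your proposal is correct and is essentially the paper's own argument: the paper's proof is a modification note on Lemma~5 of~\cite{McKague:2015:Selftestingin}, whose underlying technique is exactly your convert--commute--convert device (transfer operators to the opposite party via~(\ref{eq:XnearZ2}) so that exact cross-party commutation replaces the missing same-party commutation, apply every approximate relation only with its operators acting directly on $\ket{\psi^{\prime}}$, and exploit the exact commutativity of Alice's operators to restore the ordering of Bob's non-commuting $Z^{\prime}$ factors). Your identification of the key obstacle (naive telescoping would apply state-dependent relations in the middle of a product) and your $O(n^2\epsilon)$ bookkeeping both match the paper.
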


The main difference between Lemma 5 in~\cite{McKague:2015:Selftestingin} and our version here is that we have dropped several requirements on the operators.  Explicit error bounds are the same as in~\cite{McKague:2015:Selftestingin}.

\begin{proof}
First we address conclusion 1.  We will not give the whole proof, but instead refer the reader to the proof of Lemma 5 in~\cite{McKague:2015:Selftestingin} and note the modifications we need to make.  We first note that the proof never uses the fact that any operators are Hermitian or unitary other than to bound the operator.  Hence we can drop these requirements in favour of the less restrictive requirements $\norm{X^{\prime}_k}_\infty \leq 1$ and $\norm{Z^{\prime}_k}_\infty \leq 1$.  The proof does implicitly use the fact that many operators commute, but it is possible to remove these commutations through a careful choice of ordering of the steps.  In particular, for equations (50) to (53) in~\cite{McKague:2015:Selftestingin} the proof starts from small indices and works to larger indices.  Using this ordering requires that the leftmost operators commute to the right, but by starting at highest indices and working to lower indices the operators are already in the correct order and no commutation is required.

Here we also need to point out that if we had chosen a different initial ordering for the operators, for example decreasing indices from left to right, then the proof would still apply, provided that the same ordering is used throughout.

Now we turn our attention to conclusion 2.  Again we will follow the same proof as in~\cite{McKague:2015:Selftestingin}.  We can remove implicit commutations by choosing a different ordering.  However, this would result in the $Z^{\prime}$ operators coming out in the reverse order in the final bound.  To fix this we need to use the fact that all $X^{\prime}$ operators on $A$'s side commute, and also all $Z^{\prime}$ operators on $A$'s side commute.  First, we reverse the ordering of the $X^{\prime}$s on $A$'s side.  Then when these are converted to $Z^{\prime}$'s on $B$'s side, they are reversed again and come out in the correct order.  Meanwhile the $X^{\prime}$'s on $B$'s side are left alone.  They come out in reverse order on $A$'s side, where we commute them around to the correct ordering.

\end{proof}

The conclusions of Lemma~\ref{lemma:acandxz} are nearly the conditions of Lemma~\ref{lemma:graphstateselftestconditions}, so it is straightforward to join the two together.  This will give us our sufficient conditions for self-testing many maximally entangled qubits.

\begin{corollary}
\label{cor:2partysufficientconditions}
Under the conditions of Lemma~\ref{lemma:acandxz}, supposing that ${\{X_k, Z_k\}}_{k=1}^{n}$ are also unitary and Hermitian, there exists a local isometry $\Phi = \Phi_A \otimes \Phi_B$ and a state $\ket{junk}$ such that for every $p,q \in {\{0,1\}}^n$

\begin{equation}
    \norm{
        \Phi(X^{\prime q} Z^{\prime p} \ket{\psi^{\prime}})
        -
        \ket{junk}
        X^q Z^p
        \ket{\psi}
    }
    \leq
    O(n\sqrt{\epsilon})
\end{equation}
where $\ket{\psi}$ is as in Lemma~\ref{lemma:graphstateselftestconditions}.

\end{corollary}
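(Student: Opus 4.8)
The plan is to obtain the corollary by composing the two preceding results: Lemma~\ref{lemma:acandxz} produces exactly the two hypotheses required by Lemma~\ref{lemma:graphstateselftestconditions}, so the corollary should follow by direct substitution once the error bounds are tracked carefully and the extra unitarity/Hermiticity hypothesis is reinstated. Concretely, conclusion~1 of Lemma~\ref{lemma:acandxz}, equation~\eqref{eq:XZanticommutegeneral}, is verbatim the condition~\eqref{eq:XZanticommute9}, and conclusion~2 is verbatim the condition~\eqref{eq:XZswapgeneral9}, except that in both cases the right-hand side is now $O(n^2\epsilon)$ rather than $\epsilon$. The additional assumption in the corollary that the primed operators are unitary and Hermitian is precisely what is needed to legally invoke Lemma~\ref{lemma:graphstateselftestconditions}, whose statement requires it; recall that Lemma~\ref{lemma:acandxz} had deliberately weakened these to the operator-norm bounds $\norm{X^{\prime}_k}_\infty,\norm{Z^{\prime}_k}_\infty\le 1$, so this hypothesis must be added back here.

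With both hypotheses in hand, I would apply Lemma~\ref{lemma:graphstateselftestconditions} with its parameter $\epsilon$ set to the value $O(n^2\epsilon)$ delivered by Lemma~\ref{lemma:acandxz}. This immediately yields an isometry $\Phi$ and a junk state with
\[
    \norm{\Phi(X^{\prime q} Z^{\prime p}\ket{\psi^{\prime}}) - \ket{junk}\,X^q Z^p\ket{\psi}} \le O\!\left(\sqrt{n^2\epsilon}\right) = O(n\sqrt{\epsilon}),
\]
which is the claimed bound, with $\ket{\psi}$ the state from Lemma~\ref{lemma:graphstateselftestconditions}. The whole reduction of error bounds is thus just the rule $\sqrt{n^2\epsilon}=n\sqrt\epsilon$.

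The one point that requires more than mechanical substitution is the assertion that $\Phi$ is \emph{local}, i.e.\ $\Phi=\Phi_A\otimes\Phi_B$; the stated conclusion of Lemma~\ref{lemma:graphstateselftestconditions} only guarantees the existence of some isometry. I would resolve this by opening up the explicit SWAP-type construction underlying that lemma, in which $\Phi$ is assembled from local ancilla qubits together with controlled applications of the operators $X^{\prime}_k$ and $Z^{\prime}_k$. Under the structure imposed by Lemma~\ref{lemma:acandxz}, the operators indexed $1,\dots,n/2$ act only on $\mathcal{H}_A$ and those indexed $n/2+1,\dots,n$ act only on $\mathcal{H}_B$, so each controlled gate in the construction lives entirely on one side and the isometry factors as $\Phi_A\otimes\Phi_B$. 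This $A/B$ partition is the only ingredient beyond what is cited from~\cite{McKague:2015:Selftestingin}, and I expect establishing locality to be the main (though mild) obstacle, precisely because it forces one to inspect the construction of $\Phi$ rather than treat Lemma~\ref{lemma:graphstateselftestconditions} as a black box.
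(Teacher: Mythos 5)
Your proposal matches the paper's own proof: the paper likewise obtains the corollary by patching together Lemma~\ref{lemma:graphstateselftestconditions} and Lemma~\ref{lemma:acandxz}, getting the bound $O(\sqrt{n^2\epsilon}) = O(n\sqrt{\epsilon})$, and it singles out exactly the same subtlety you did --- the locality $\Phi = \Phi_A \otimes \Phi_B$, which follows from the explicit construction of $\Phi$ in the proof of Lemma~\ref{lemma:graphstateselftestconditions} as discussed in~\cite{McKague:2015:Selftestingin}. Your treatment is, if anything, slightly more detailed than the paper's, and it is correct.
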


The proof is basically to patch together Lemma~\ref{lemma:graphstateselftestconditions} and~\ref{lemma:acandxz} and work out the bounds.  One aspect which requires a bit more attention is the fact that $\Phi$ is local.  This is a consequence of the construction of $\Phi$ in the proof of Lemma~\ref{lemma:graphstateselftestconditions} and is discussed in~\cite{McKague:2015:Selftestingin}.

\section{Parallel CHSH is a self-test}
Now that we have established a framework for self-testing many pairs of maximally entangled qubits, we are ready to develop our self-test based on parallel repetition of CHSH\@.

\subsection{Definition of the non-local game}
The non-local game that we will be using for self-testing is a straightforward parallel repetition of the CHSH game:
\begin{enumerate}
    \item The referee selects question $q \in {\{0,1\}}^{n}$ uniformly at random and sends $q_a$ to Alice and $q_b$ to Bob
    \item Alice and Bob respond with bit strings $x,y \in {\{0,1\}}^{\frac{n}{2}}$.
    \item The referee chooses $k \in \{1 \dots \frac{n}{2} \}$ uniformly at random.
    \item Alice and Bob win if $q_k q_{k + \frac{n}{2}} = x_k \oplus y_k$
\end{enumerate}
There are many possible rules for determining when Alice and Bob win.  The particular rule we use is quite weak in a sense.  We throw away almost all of the information that Alice and Bob provide, and use only one bit from each of them.   However, as we shall see, it is still powerful enough for self-testing the entire state.  The most important feature is that the expected value of this game measures the average CHSH value over all subtests, which could be measured in other ways.

Customarily the value of a CHSH strategy is calculated by assigning $1$ for a win and $-1$ for a loss and summing over the expectations of the four possible questions that can be asked. The value falls between $-4$ and $4$, with quantum strategies between $- 2\sqrt{2}$ and $2 \sqrt{2}$ by the Cirel'son inequality~\cite{Cirelson:1980:Quantum-general}.  We will take the convention of assigning $4$ to a win and $-4$ to a loss, and then taking the expected value over all possible questions and subtests (i.e.\ for various $k = 1 \dots \frac{n}{2}$).  This corresponds to averaging the customary CHSH value over all subtests, and is bounded by $2 \sqrt{2}$ for quantum strategies.

\subsection{Modelling the players' behaviour}

In order to use Lemma~\ref{cor:2partysufficientconditions} we will need to define some operators $X^{\prime}_j$ and $Z^{\prime}_j$.  We will do so by way of Alice and Bob's measurement strategies.  But first we need a model.  Alice's behaviour can be modelled as a collection of projective measurements:\footnote{More generally, we could consider POVMs.  However, since the dimension of Alice and Bob's systems are unconstrained, we can instead represent any POVM as projective measurements over a larger Hilbert space.}
\begin{equation}
	\mathcal{M}_{q_a}= {\{ \Pi^{q_a}_a\}}_{a}
\end{equation}
where $q$ is the question, $a \in {\{0,1\}}^{\frac{n}{2}}$ is a string of answers and $\Pi^{q_a}_a\Pi^{q_a}_{a^{\prime}} = \delta_{a,a^{\prime}} \Pi^{q_a}_a$.  We can define projectors for individual symbols in the answer by
\begin{equation}
	\Gamma^{q_a}_{k,x} = \sum_{a: a_k = x} \Pi^{q_a}_a
\end{equation}
where $k \in \{1 \dots \frac{n}{2}\}$, $a_k$ is the $k$-th symbol of $a$, and $x \in \{0,1\}$.  Note that for all $j,k, x, x^{\prime}$ the operators $\Gamma^{q_a}_{j,x}$ and $\Gamma^{q_a}_{k,x^{\prime}}$ commute since $\Pi^q_a$ and $\Pi^q_{a^{\prime}}$ commute for all $a,a^{\prime}$.  We can next define observables for each answer symbol by
\begin{equation}
M^{\prime q_a}_k = \Gamma^{q_a}_{k,0} - \Gamma^{q_a}_{k, 1}
.
\end{equation}
Whenever $q_a = r_a$, $M^{\prime q_a}_k$ and $M^{\prime r_a}_\ell$ will commute by construction.   Now measuring $\mathcal{M}_{q}$ is equivalent to measuring $M^{\prime q}_k$ for each $k$ and returning the resulting eigenvalues as a string (translating 1 to 0 and -1 to 1).  Each $M^{\prime q}_k$ is Hermitian and unitary.

We can model Bob's behaviour in a similar way, defining $N^{\prime q_b}_k$ for $k \in \{\frac{n}{2} + 1 \dots n\}$ analogously to $M^{\prime q_a}_k$.  Then every $N^{\prime q_b}_k$ will commute with every $M^{\prime q_a}_\ell$ since these operators are on different systems.

Let us set
\begin{widetext}
\begin{multline}
   f(q_a, q_b, k) :=
   \bra{\psi^{\prime}}
      \left(
      {(-1)}^{{(q_a)}_k {(q_b)}_k}
      M^{\prime q_a}_k
      N^{\prime q_b}_{k + \frac{n}{2}}
      +
      {(-1)}^{{(\overline{q}_a)}_k {(q_b)}_k}
      M^{\prime \overline{q}_a}_k
      N^{\prime q_b}_{k + \frac{n}{2}}
      \right)
   \ket{\psi^{\prime}}
   +
\\
   \bra{\psi^{\prime}}
      \left(
      {(-1)}^{{(q_a)}_k {(\overline{q}_b)}_k}
      M^{\prime q_a}_k
      N^{\prime \overline{q}_b}_{k + \frac{n}{2}}
      +
      {(-1)}^{{(\overline{q}_a)}_k {(\overline{q}_b)}_k}
      M^{\prime \overline{q}_a}_k
      N^{\prime \overline{q}_b}_{k + \frac{n}{2}}
      \right)
   \ket{\psi^{\prime}}
\end{multline}
\end{widetext}
where $\overline{q}_a$ is $q_a$ with 0 and 1 interchanged, and analogously for $q_b$.

We can interpret $f$ as the CHSH value that Alice and Bob would achieve for subtest $k$ if they use the operators corresponding to questions $q_a$, $q_b$, $\overline{q}_a$ and $\overline{q}_b$.  The value of the non-local game can be expressed as
\begin{equation}
	\label{eq:chshvalue}
   \frac{1}{n 2^{n - 1}}
   \sum_{q_b}
   \sum_{q_a}
   \sum_k
   f(q_a, q_b, k)
   .
\end{equation}
This is the value of Alice and Bob's strategy and is the average CHSH value over all subtests $1 \dots \frac{n}{2}$.  It falls between $- 2 \sqrt{2}$ and $2 \sqrt{2}$ for quantum strategies by the Cirel'son bound~\cite{Cirelson:1980:Quantum-general}.

\subsection{Self-testing from a few questions}

In this section we develop a proof of self-testing in a special case.  Our non-local game has a very large number of questions and it is difficult to make use of the information contained in the corresponding answers.  In particular, each question is asked with an exponentially small probability, so we will know almost nothing about what happens when any particular question is asked.  For now we will ignore this problem and assume that for some particular questions we know what the distribution of answers is.  This will allow us to draw conclusions about the players' behaviour.  In the next section we will return to the full non-local game.

In the previous section we modelled Alice and Bob's measurement strategies.  We next need to define $X^{\prime}_j$ and $Z^{\prime}_j$ so that we can use Corollary~\ref{cor:2partysufficientconditions}.  To do this we will make use of a result from~\cite{McKague:2012:Robust-self-tes}:

\begin{lemma}[McKague et al.~\cite{McKague:2012:Robust-self-tes}, Theorem 2]
\label{lemma:buildxz}
Let $A_0, A_1$ be Hermitian and unitary operators on $\mathcal{H}_A$, $B_0, B_1$ be Hermitian and unitary operators on $\mathcal{H}_B$, and $\ket{\psi} \in \mathcal{H}_A \otimes \mathcal{H}_B$ be a state such that
\begin{equation}
    \bra{\psi}
        \left(
            A_0 B_0
            +
            A_1 B_0
            +
            A_0 B_1
            -
            A_1 A_1
        \right)
    \ket{\psi}
    \geq 2 \sqrt{2} - \delta
\end{equation}
for $0 \leq \delta \leq 1$.  Then setting
\begin{eqnarray}
    X^{\prime}_A & = & A_0 \\
    Z^{\prime}_A & = & A_1 \\
    X^{\prime}_B & = & \frac{B_0 + B_1}{| B_0 + B_1|} \\
    Z^{\prime}_B & = & \frac{B_0 - B_1}{| B_0 - B_1|}
\end{eqnarray}
we have
\begin{eqnarray}
    \norm{
        X^{\prime}_A
        Z^{\prime}_A
        \ket{\psi}
        +
        Z^{\prime}_A
        X^{\prime}_A
        \ket{\psi}
    } & \leq & 4{(\delta \sqrt{2})}^\frac{1}{2} \\
    \norm{
        X^{\prime}_B
        Z^{\prime}_B
        \ket{\psi}
        +
        Z^{\prime}_B
        X^{\prime}_B
        \ket{\psi}
    } & \leq & 4{(\delta \sqrt{2})}^\frac{1}{2} \\
    \norm{
        X^{\prime}_A
        \ket{\psi}
        -
        Z^{\prime}_B
        \ket{\psi}
    } & \leq & 4 {(\delta \sqrt{2})}^{\frac{1}{4}}\\
    \norm{
        X^{\prime}_B
        \ket{\psi}
        -
        Z^{\prime}_A
        \ket{\psi}
    } & \leq & 4 {(\delta \sqrt{2})}^{\frac{1}{4}}
\end{eqnarray}
where
\begin{equation}
    |M| := \sqrt{M^2}
\end{equation}
for some operator $M$.  Further, $X^{\prime}_A$, $X^{\prime}_B$, $Z^{\prime}_A$ and $Z^{\prime}_B$ are all Hermitian and unitary.
\end{lemma}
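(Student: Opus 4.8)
The plan is to run the Tsirel'son sum-of-squares argument to turn the CHSH value into approximate operator identities on $\ket{\psi}$, and then to regularise Bob's operators. Write $C = B_0 + B_1$ and $D = B_0 - B_1$, so that the CHSH operator is $A_0 C + A_1 D$ and $X_B' = C/|C|$, $Z_B' = D/|D|$ are the signs of the Hermitian operators $C,D$. Two purely algebraic facts do the heavy lifting: from $B_0^2 = B_1^2 = I$ one gets $C^2 + D^2 = 4I$ together with the exact anticommutation $CD + DC = 0$. For the last sentence of the statement, $C/|C|$ and $D/|D|$ are Hermitian with square equal to the projection onto their supports; fixing any sign (say $+I$) on $\ker C$ and $\ker D$ makes them Hermitian and unitary, and since $\ker C \cap \ker D = \{0\}$ the two regularisations never clash.

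First I would establish the identity
\begin{equation}
\left(A_0 - \tfrac{1}{\sqrt 2} C\right)^2 + \left(A_1 - \tfrac{1}{\sqrt 2} D\right)^2 = 4I - \sqrt 2\left(A_0 C + A_1 D\right),
\end{equation}
which follows by expanding the squares and using $A_i^2 = I$, $C^2 + D^2 = 4I$, and the fact that the $A_i$ commute with $C$ and $D$. Taking the expectation in $\ket{\psi}$ and inserting the hypothesis bounds the right-hand side by $\sqrt 2\,\delta$; as each term on the left is a nonnegative expectation, both are at most $\sqrt 2\,\delta$, giving the raw relations
\begin{equation}
\norm{\left(A_0 - \tfrac{1}{\sqrt 2} C\right)\ket{\psi}} \leq (\sqrt 2\,\delta)^{1/2}, \qquad \norm{\left(A_1 - \tfrac{1}{\sqrt 2} D\right)\ket{\psi}} \leq (\sqrt 2\,\delta)^{1/2}.
\end{equation}

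The two anticommutation bounds then come directly from the raw relations, which is why they carry the stronger exponent $\tfrac12$. Using that $A_0$ commutes with $D$ and $A_1$ with $C$, I would transport $A_0 A_1 \ket{\psi}$ to $\tfrac12 D C \ket{\psi}$ and $A_1 A_0 \ket{\psi}$ to $\tfrac12 C D \ket{\psi}$, each replacement costing one raw relation times $\norm{C}_\infty, \norm{D}_\infty \leq 2$; summing and invoking $CD + DC = 0$ collapses the result, so that $\norm{(A_0 A_1 + A_1 A_0)\ket{\psi}} = O(\delta^{1/2})$, the first conclusion. For the second, the exact relation $CD + DC = 0$ lifts to $X_B' Z_B' + Z_B' X_B' = 0$ on the common support of $C$ and $D$: the reflection $X_B'$ interchanges the positive and negative eigenspaces of $C$, so conjugation by it sends $D \mapsto -D$ and hence $Z_B' \mapsto -Z_B'$. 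The only residue is supported on $\ker C$ and is bounded by $\norm{P_{\ker C}\ket{\psi}}$, controlled at order $\delta^{1/2}$ by the concentration estimate below.

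The crux, and the step I expect to be the main obstacle, is passing from $\tfrac{1}{\sqrt 2} C$ to the unitary $X_B' = C/|C|$ for the two correlation bounds: $\tfrac{1}{\sqrt 2} C$ only approximates $\mathrm{sign}(C)$ to the extent that $\ket{\psi}$ is spectrally concentrated near the eigenvalues $\pm\sqrt 2$ of $C$. I would first upgrade the raw relation to concentration: since $A_0$ is unitary and commutes with $C$,
\begin{equation}
\norm{\left(\tfrac12 C^2 - I\right)\ket{\psi}} = \norm{\left(\tfrac{1}{\sqrt 2}C + A_0\right)\left(\tfrac{1}{\sqrt 2}C - A_0\right)\ket{\psi}} = O(\delta^{1/2}),
\end{equation}
and then compare two functions pointwise on the spectrum $[-2,2]$: with $g(\lambda) = \mathrm{sign}(\lambda) - \lambda/\sqrt 2$ and $h(\lambda) = \lambda^2/2 - 1$ one has $|h/g| = 1 + |\lambda|/\sqrt 2 \geq 1$, so $g(\lambda)^2 \leq h(\lambda)^2$ everywhere. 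Hence $\norm{g(C)\ket{\psi}}^2 \leq \norm{h(C)\ket{\psi}}^2 = O(\delta)$, and the triangle inequality gives $\norm{(A_0 - X_B')\ket{\psi}} = O(\delta^{1/2})$, comfortably inside the stated $\tfrac14$-power bound; the same argument applied to $A_1$ and $D$ delivers the other correlation bound. The genuinely delicate points are this spectral comparison and the consistent kernel convention used to make $X_B', Z_B'$ unitary; the remaining work is only the tracking of constants needed to reach the explicit factors of $4$ and powers of $\sqrt 2$ in the statement.
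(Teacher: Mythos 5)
A preliminary remark: this paper never proves Lemma~\ref{lemma:buildxz} at all --- it is imported wholesale as Theorem~2 of~\cite{McKague:2012:Robust-self-tes} --- so your proposal can only be compared against that source, not against an in-paper argument. Measured against it, your route is genuinely different and quantitatively sharper. The sum-of-squares identity $(A_0-\frac{1}{\sqrt{2}}C)^2+(A_1-\frac{1}{\sqrt{2}}D)^2=4I-\sqrt{2}(A_0C+A_1D)$ gives the raw relations with exponent $\frac{1}{2}$ in one line, and your pointwise domination $g(\lambda)^2\le h(\lambda)^2$ on $[-2,2]$ (which does hold, including at $\lambda=0$ under the $+1$ kernel convention) upgrades them to correlation bounds of order $\delta^{1/2}$; the original argument is lossier, which is precisely why the printed correlation bounds carry the exponent $\frac{1}{4}$. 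Your correlation bound, roughly $(2+\sqrt{2})(\delta\sqrt{2})^{1/2}$, implies the printed $\frac{1}{4}$-power bound for $\delta\le1$, and a completed version of your kernel accounting for Bob's anticommutation gives $2\norm{P_{\ker C}\ket{\psi}}+2\norm{P_{\ker D}\ket{\psi}}\le 4(\delta\sqrt{2})^{1/2}$, exactly the printed constant --- note the residue involves both kernels, not only $\ker C$, with the cross term vanishing because $\ker C\cap\ker D=\{0\}$. Two smaller repairs: the Alice-side transport costs $2(1+\sqrt{2})(\delta\sqrt{2})^{1/2}$, which overshoots the printed constant $4$, so that step needs tightening; and it is $D$, not $X^{\prime}_B$, that exchanges the $\pm\lambda$ eigenspaces of $C$ --- the fact you actually use, $X^{\prime}_B D X^{\prime}_B=-D$ on the support of $C$, is correct.

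The substantive issue you must address is the pairing in the last two conclusions. Reading the hypothesis as $\bra{\psi}\left(A_0(B_0+B_1)+A_1(B_0-B_1)\right)\ket{\psi}\ge 2\sqrt{2}-\delta$ (your fix of the typo $-A_1A_1$, and the reading consistent with how the lemma is invoked via $f(0\dots0,0\dots0,k)$), your argument proves that $\norm{X^{\prime}_A\ket{\psi}-X^{\prime}_B\ket{\psi}}$ and $\norm{Z^{\prime}_A\ket{\psi}-Z^{\prime}_B\ket{\psi}}$ are small, whereas the lemma as printed asserts smallness of $\norm{X^{\prime}_A\ket{\psi}-Z^{\prime}_B\ket{\psi}}$ and $\norm{X^{\prime}_B\ket{\psi}-Z^{\prime}_A\ket{\psi}}$. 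These are not the same statement; in fact they are incompatible, since your correlation bound together with Bob's anticommutation excludes $A_0\ket{\psi}\approx Z^{\prime}_B\ket{\psi}$, and the ideal strategy ($\ket{\psi}=\frac{1}{\sqrt{2}}(\ket{00}+\ket{11})$, $A_0=X$, $A_1=Z$, $B_{0,1}=(X\pm Z)/\sqrt{2}$) satisfies the hypothesis with $\delta=0$ while $\norm{X^{\prime}_A\ket{\psi}-Z^{\prime}_B\ket{\psi}}=\sqrt{2}$. The printed statement is thus a garbled transcription: because the framework of Lemma~\ref{lemma:acandxz} correlates $X$ on one side with $Z$ on the other, the intended reading swaps the definitions of $X^{\prime}_B$ and $Z^{\prime}_B$ (a swap that must also be made in the proof of Lemma~\ref{lemma:questionstoselftesting}, which inherits the same typo). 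Your mathematics proves the correct theorem, but your write-up claims that $\norm{(A_0-X^{\prime}_B)\ket{\psi}}=O(\delta^{1/2})$ lands ``comfortably inside the stated $\frac{1}{4}$-power bound'' when the stated bound concerns a different pair of operators; a complete answer has to flag and resolve this mismatch rather than silently prove the relabelled statement.
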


Note that it is possible that zero is an eigenvalue of $|M|$.  For theses cases we must change $M$ on the associated subspace to remove zero eigenvalues and avoid divide-by-zero problems.  We can make the new eigenvalue as close to zero as we like, which corresponds to a very small change in Bob's measurements.  By making this change small enough, it will be negligible compared to the other errors in the system.

In the next lemma we show how to link the above result with Corollary~\ref{cor:2partysufficientconditions} to obtain a self-testing result.  Here we assume that we have very good information about a small number of possible questions.  This could be accomplished, for example, by having the referee ask only a subset of possible questions.

\begin{lemma}
\label{lemma:questionstoselftesting}
Suppose that:

\begin{enumerate}
   \item For each $k \in \{1 \dots \frac{n}{2}\}$
      \begin{equation}
         f(0 \dots 0, 0 \dots 0, k) \geq 2 \sqrt{2} - \delta
      \end{equation}

   \item
      For every $u,v \in \{1 \dots \frac{n}{2}\}$ with $u \neq v$ there exists a question $q_a$ with ${(q_a)}_u = 0$ and ${(q_a)}_v = 1$ such that
      \begin{equation}
         f(q_a, 0 \dots 0, j) \geq 2 \sqrt{2} - \delta
      \end{equation}
      for $j = u,v$.
\end{enumerate}
Then there exist operators $X^{\prime}_k$ and $Z^{\prime}_k$ for each $k \in \{1 \dots \frac{n}{2}\}$ such that the conditions of Corollary~\ref{cor:2partysufficientconditions} are satisfied with
$\epsilon_1 = 32{(\delta\sqrt{2})}^{\frac{1}{4}}$,
$\epsilon_2= 4{(\delta\sqrt{2})}^{\frac{1}{4}}$
and
$\epsilon_3 = 4{(\delta\sqrt{2})}^{\frac{1}{2}}$.
\end{lemma}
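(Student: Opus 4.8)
The plan is to define Alice's operators directly from the two extreme questions and Bob's operators through the CHSH-rigidity construction of Lemma~\ref{lemma:buildxz}, and then to verify the five hypotheses of Corollary~\ref{cor:2partysufficientconditions} one at a time. For $k \in \{1,\dots,\frac{n}{2}\}$ I would set $X^{\prime}_k := M^{\prime 0\cdots0}_k$ and $Z^{\prime}_k := M^{\prime 1\cdots1}_k$. Since all the $X^{\prime}_k$ share the single question $0\cdots0$ and all the $Z^{\prime}_k$ share the single question $1\cdots1$, the ``same question'' commutation built into the model immediately gives that the $X^{\prime}_k$ pairwise commute and the $Z^{\prime}_k$ pairwise commute (hypotheses 2 and 3). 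For Bob I would feed the observables $N^{\prime 0\cdots0}_{k+\frac{n}{2}}$ and $N^{\prime 1\cdots1}_{k+\frac{n}{2}}$ into Lemma~\ref{lemma:buildxz} to produce $X^{\prime}_{k+\frac{n}{2}}$ and $Z^{\prime}_{k+\frac{n}{2}}$; these are Hermitian and unitary, so together with the $M^{\prime}$ observables every operator satisfies the required operator-norm bound.

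The anticommutation and swap bounds come essentially for free from the first hypothesis. Observing that the CHSH value $f(q_a,q_b,k)$ for a fixed subtest $k$ has a sign pattern determined only by the bits $(q_a)_k$ and $(q_b)_k$, the all-zeros assumption $f(0\cdots0,0\cdots0,k)\geq 2\sqrt{2}-\delta$ is exactly a CHSH expression $A_0B_0+A_1B_0+A_0B_1-A_1B_1$ with $A_0=X^{\prime}_k$, $A_1=Z^{\prime}_k$, $B_0=N^{\prime0\cdots0}_{k+\frac{n}{2}}$, $B_1=N^{\prime1\cdots1}_{k+\frac{n}{2}}$. Lemma~\ref{lemma:buildxz} then supplies the anticommutation of $X^{\prime}_k$ with $Z^{\prime}_k$ and of $X^{\prime}_{k+\frac{n}{2}}$ with $Z^{\prime}_{k+\frac{n}{2}}$ (giving $\epsilon_3=4(\delta\sqrt{2})^{1/2}$), together with the two swap relations $\norm{X^{\prime}_k\ket{\psi^{\prime}}-Z^{\prime}_{k+\frac{n}{2}}\ket{\psi^{\prime}}}\leq\epsilon_2$ and $\norm{X^{\prime}_{k+\frac{n}{2}}\ket{\psi^{\prime}}-Z^{\prime}_k\ket{\psi^{\prime}}}\leq\epsilon_2$ for $\epsilon_2=4(\delta\sqrt{2})^{1/4}$, valid for every index with $k+\frac{n}{2}$ read modulo $n$.

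The real work is the approximate commutation bound $\epsilon_1$ between $X^{\prime}_k$ and $Z^{\prime}_\ell$ for $k\neq\ell$. When one index is Alice's and the other Bob's the operators act on different tensor factors and commute exactly, so only the cases $k,\ell\leq\frac{n}{2}$ and $k,\ell>\frac{n}{2}$ remain. Here I would invoke the second hypothesis: for the pair $(u,v)$ it produces an Alice question $q_a$ with $(q_a)_u=0$ and $(q_a)_v=1$, again paired with Bob's fixed question $0\cdots0$, whose subtests $u$ and $v$ are near-optimal. The observables $M^{\prime q_a}_u$ and $M^{\prime q_a}_v$ then commute exactly, being two symbols of the single question $q_a$. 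Because Bob's question is unchanged and the role assignment in subtest $u$ depends only on $(q_a)_u=0$, this subtest produces the very same Bob operator $Z^{\prime}_{u+\frac{n}{2}}$ as the all-zeros question, so the swap relation gives $M^{\prime q_a}_u\ket{\psi^{\prime}}\approx Z^{\prime}_{u+\frac{n}{2}}\ket{\psi^{\prime}}$; subtest $v$ has $(q_a)_v=1$, which flips the sign of $B_1$ so that the construction returns the all-zeros operator $X^{\prime}_{v+\frac{n}{2}}$, giving $M^{\prime q_a}_v\ket{\psi^{\prime}}\approx X^{\prime}_{v+\frac{n}{2}}\ket{\psi^{\prime}}$. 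Using the all-zeros swaps to rewrite $X^{\prime}_u$ and $Z^{\prime}_v$ in terms of the Bob operators $Z^{\prime}_{u+\frac{n}{2}}$ and $X^{\prime}_{v+\frac{n}{2}}$, and then transferring the exact commutation of $M^{\prime q_a}_u$ and $M^{\prime q_a}_v$ onto those Bob operators, reduces the commutator to a sum of swap errors and yields $\epsilon_1=8\epsilon_2=32(\delta\sqrt{2})^{1/4}$. The same Bob-side estimate simultaneously settles the case $k,\ell>\frac{n}{2}$.

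I expect the main obstacle to be exactly this last step, because the swap and near-equality estimates hold only when applied directly to $\ket{\psi^{\prime}}$, whereas the commutator has one operator sitting between the other and the state, and one cannot simply substitute a state-level approximation underneath an intervening operator. The way through is to always arrange, using an exact commutation --- either between Alice's and Bob's factors or between two symbols of the common question $q_a$ --- that the operator to be replaced acts directly on $\ket{\psi^{\prime}}$ before each substitution. It is worth emphasising that the swap relations alone are insufficient: moving an operator to Bob's side and back merely exchanges an Alice commutator for a Bob commutator, so the cycle never closes. The same-question observables $M^{\prime q_a}_u,M^{\prime q_a}_v$ supplied by the second hypothesis are precisely what break this cycle, and the delicate part of the bookkeeping is to keep careful track of the ordering so that no spurious reordering of the final operators is introduced.
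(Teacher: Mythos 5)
Your proposal is correct and follows essentially the same route as the paper: define Alice's operators from the all-zeros and all-ones questions, build Bob's via Lemma~\ref{lemma:buildxz} (which yields $\epsilon_2$, $\epsilon_3$ and the Hermitian/unitary requirements), and obtain $\epsilon_1$ by using the condition-2 question $q_a$ to produce exactly commuting same-question Alice observables that are swap-close to the all-zeros Bob operators, routing the Alice-side case through Bob's side for a total of $8\epsilon_2 = 32(\delta\sqrt{2})^{1/4}$. Your accounting of the error budget and your remark that substitutions must always be made adjacent to $\ket{\psi^{\prime}}$ (using exact cross-party or same-question commutation) match the paper's argument, merely spelled out in more detail.
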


\begin{proof}
For each $k$, the conditions of Lemma~\ref{lemma:buildxz} are satisfied by setting
\begin{eqnarray}
	A_0 & = & M^{\prime 0\dots 0}_k \\
	A_1 & = & M^{\prime 1\dots 1}_k \\
	B_0 & = & N^{\prime 0\dots 0}_{k + n/2} \\
	B_1 & = & N^{\prime 1\dots 1}_{k + n/2}
\end{eqnarray}
Hence on Alice's side ($k = 1 \dots \frac{n}{2}$) we set
\begin{eqnarray}
    X^{\prime}_k & := & M^{\prime 0\dots 0}_k
\\
    Z^{\prime}_k & := & M^{\prime 1\dots 1}_k
\end{eqnarray}
for $k = 1 \dots \frac{n}{2}$.  Note that by construction all $Z^{\prime}_k$ pairwise commute on Alice's side, as do all $X^{\prime }_k$.  For Bob's side ($k= \frac{n}{2} + 1 \dots n$)  we define
\begin{eqnarray}
   X^{\prime}_k & := &
   \frac{
      N^{\prime 0 \dots 0}_k + N^{\prime 1 \dots 1}_k
   }{
      \left|
         N^{\prime 0 \dots 0}_k + N^{\prime 1 \dots 1}_k
      \right|
   }
\\
   Z^{\prime}_k & := &
   \frac{
      N^{\prime 0 \dots 0}_k - N^{\prime 1 \dots 1}_k
   }{
      \left|
         N^{\prime 0 \dots 0}_k - N^{\prime 1 \dots 1}_k
      \right|
   }.
\end{eqnarray}
Here it is possible that two $X^{\prime}_j$'s on Bob's side do not commute.  With these definitions we can set $\epsilon_2 =4 {(\delta \sqrt{2})}^{\frac{1}{4}}$ and $\epsilon_3 = 4{(\delta\sqrt{2})}^{\frac{1}{2}}$ for Corollary~\ref{cor:2partysufficientconditions}.

Now fix $k, \ell$ with $k \neq \ell$.  We want to know that $Z^{\prime}_\ell$ approximately commutes with $X^{\prime}_k$ to fix $\epsilon_1$.  If $k$ is on Alice's side and $\ell$ is on Bob's side (or vice versa) then the operators commute exactly.  Hence the two cases to check are where $k$ and $\ell$ are both at most $\frac{n}{2}$ or when they are both larger than $\frac{n}{2}$.

Supposing that $k$ and $\ell$ are both larger than $\frac{n}{2}$, we use condition 2 with $u = \ell - \frac{n}{2}$ and $v = k - \frac{n}{2}$  to find $q_a$ and define
\begin{eqnarray}
   X^{\prime \prime}_{\ell - \frac{n}{2}} & := & M^{\prime q_a}_{\ell - \frac{n}{2}}
\\
   Z^{\prime \prime}_{k - \frac{n}{2}} & := & M^{\prime q_a}_ {k - \frac{n}{2}}
   .
\end{eqnarray}
Applying Lemma~\ref{lemma:buildxz} twice gives
\begin{eqnarray}
   \norm{
      X^{\prime \prime}_{\ell - \frac{n}{2}}
      \ket{\psi}
      -
      Z^{\prime}_{\ell}
      \ket{\psi}
   }
   & \leq &
   4 {(\delta \sqrt{2})}^{\frac{1}{4}}
\\
   \norm{
      Z^{\prime \prime}_{k - \frac{n}{2}}
      \ket{\psi}
      -
      X^{\prime}_{k}
      \ket{\psi}
   }
   & \leq &
   4 {(\delta \sqrt{2})}^{\frac{1}{4}}
\end{eqnarray}
Here Bob's operators have not changed from their previous definition.  Moreover, since $X^{\prime \prime}_{\ell - \frac{n}{2}}$ and $Z^{\prime \prime}_{k - \frac{n}{2}}$ are defined from the same question, they commute with each other.  Now it is a straightforward exercise to combine these two estimates and show that for this case we can use $\epsilon_1 = 16 {(\delta \sqrt{2})}^{\frac{1}{4}}$.
Roughly, we use the estimates to move $X^{\prime}_k$ and $Z^{\prime}_\ell$ over to Alice's side as
$X^{\prime \prime}_{\ell - \frac{n}{2}}$ and $Z^{\prime \prime}_{k - \frac{n}{2}}$,
where we can commute them, and then move them back to Bob's side.

For the other case, where $k$ and $\ell$ are less than $\frac{n}{2}$ so that $X^{\prime}_k$ and $Z^{\prime}_\ell$ are on Alice's side, we first move $X^{\prime}_k$ and $Z^{\prime}_\ell$ over to Bob's side, and then back to Alice side, but as $X^{\prime \prime}_k$ and $Z^{\prime \prime}_\ell$, which commute.  Then they go back to
$X^{\prime}_k$ and $Z^{\prime}_\ell$
via Bob's side.  Here we end up with a worse estimate, and we finally set
\begin{equation}
   \epsilon_1 = 32 {(\delta \sqrt{2})}^{\frac{1}{4}}.
\end{equation}

\end{proof}

Although this is not our main result, we now have a type of parallel CHSH self-test.  We simply need to estimate the values (for all $k$) of $f(0\dots 0, 0 \dots 0, k)$ and $f(q_a, 0 \dots 0, k)$ for a suitable set of $q_a$.  One minimal set of such $q_a$ would be the strings indexed by $j$ where the $k$th bit is 1 exactly when the $j$th bit of the binary representation of $k$ is 1.  The total number of questions is then $O(\log n)$.  Whenever $k \neq \ell$ their binary representations will differ in some position $j$.  Then the question with index $j$ will have bits $k$ and $\ell$ not equal.  This is analogous to the parallel Mayers-Yao test developed in~\cite{McKague:2015:Selftestingin}.

The test is in a strict sense not parallel since the question for one subtest is not independent of the questions for other subtests.  Nevertheless, it does provide a self-test and the small number of questions could be useful for some applications.

\subsection{Self-testing from parallel CHSH}

In the previous section we showed self-testing, but only looked at a very small number of possible questions.  In the full non-local game we would not be able to draw any strong conclusions about specific questions since they would be asked with very small probability.  In this section we take a different approach.  If the players' value of the non-local game is very high then there must be at least \emph{some} questions which are driving the value of the non-local game up.  These might not be the questions used in Lemma~\ref{lemma:questionstoselftesting}, but through various symmetries of the non-local game we can expand the list of questions which work for Lemma~\ref{lemma:questionstoselftesting}.

First let us discuss symmetries.  Lemma~\ref{lemma:questionstoselftesting} requires that questions $0 \dots 0$ and $1 \dots 1$ for Alice and Bob can be used to achieve a high CHSH value.  However, there is obvious symmetry in the non-local game and it is straightforward to relabel the questions and answers so that many different sets of questions can be used.  In particular, the mapping
\begin{eqnarray}
   q_a & \mapsto & q_a \oplus 1_k  \\
   M^{\prime q_a}_\ell & \mapsto & M^{\prime q_a \oplus 1_k}_\ell, \, \ell = 1 \dots \frac{n}{2} \\
   N^{\prime q_b}_{k + n/2} & \mapsto & {(-1)}^{{(q_b)}_{k}} N^{\prime q_b}_{k + n/2}, \, k = 1 \dots \frac{n}{2}
\end{eqnarray}
allows us to a flip bit $k$ in all questions on Alice's side.  An analogous map allows us to flip any bit in all questions on Bob's side.  These mappings preserve the value of the non-local game. Applying a suitable set of them allows us to map any $q_a$  and $q_b$ to $0 \dots 0$, necessarily taking $\overline{q_a}$ and $\overline{q_b}$ to $1 \dots 1$.

Now we attempt to find some questions which give $f(q_a, q_b, k)$ a large value.  The basic idea is that it is impossible for the value of every sub-test to be below average.  Suppose that the value of the non-local game is at least $2 \sqrt{2} - \epsilon$.  We can view the value of the non-local game given in~\eqref{eq:chshvalue} as an average over possible $q_b$ of some value depending on $q_b$.  At least one $q_b$ is at least average, so there exists some $q_b$ such that
\begin{equation}
   \frac{1}{n 2^{\frac{n}{2} - 1}}
   \sum_{q_a}
   \sum_k
   f(q_a, q_b, k)
   \geq
   2 \sqrt{2} - \epsilon
   .
\end{equation}
Let us choose such a $q_b$ and then remap the non-local game so that $q_b$ maps to $0 \dots 0$.  We can now do the same trick with $q_a$ to find that there is at least one $q_a$ such that
\begin{equation}
   \frac{2}{n}
   \sum_k
   f(q_a, 0 \dots 0, k)
   \geq
   2 \sqrt{2} - \epsilon
   .
\end{equation}
Again we remap the non-local game so that this above average $q_a$ gets mapped to $0 \dots 0$.

In order to find a bound for each separate $k$, we observe that for each fixed $k$ the summand is bounded above by $2 \sqrt{2}$ using the Cirel'son inequality.  Supposing that all the error falls on one value of $k$, we find
\begin{equation}
   f(0 \dots 0, 0 \dots 0, k) \geq 2 \sqrt{2} - \frac{n}{2} \epsilon
\end{equation}
for all $k$.  Hence we have satisfied the first condition of Lemma~\ref{lemma:questionstoselftesting}.

Turning our attention to the second condition of Lemma~\ref{lemma:questionstoselftesting}, we first use a similar reasoning to above to find that for each $j$
\begin{equation}
   \frac{1}{2^{\frac{n}{2}}}
   \sum_{q_a}
   f(q_a, 0\dots 0, j)
   \geq
   2 \sqrt{2} - \frac{n}{2} \epsilon
   .
\end{equation}
Now let us fix $k, \ell$ and partition the possible $q_a$ into two sets.  The first set $S$ has either $({(q_a)}_k, {(q_a)}_\ell) = (0,0)$ or $({(q_a)}_k, {(q_a)}_\ell) = (1,1)$.  That is, questions in $S$ have the $k$th and $\ell$th bits equal.  The second set $T$ is the remaining $q_a$.  Note that sets are of equal size, and $q_a$ and $\overline{q}_a$ are in the same set.  Supposing that all the error falls on $T$ we find
\begin{equation}
   \frac{1}{2^{\frac{n}{2}}}
   \sum_{q_a \in T}
   f(q_a, 0\dots 0, j)
   \geq
   2 \sqrt{2} - n \epsilon
   .
\end{equation}
for $j = k, \ell$.  Now at least one $q_a$ is at least average, so there is some $q_a$ which differs in bits $k$ and $\ell$ which satisfies
\begin{equation}
   f(q_a, 0\dots 0, j)
   \geq
   2 \sqrt{2} - n \epsilon
\end{equation}
for $j = k, \ell$.  From the definition of $f$, we see that the above must also hold for $\overline{q}_a$.  At least one of $q_a$ and $\overline{q}_a$ has the $k$th bit equal to 0 and the $\ell$th bit equal to 1.  So, we have satisfied the second condition of Lemma~\ref{lemma:questionstoselftesting}.

We have shown that if the value of the non-local game is high enough then the conditions of Lemma~\ref{lemma:questionstoselftesting} are satisfied with
$
   \delta = O(n \epsilon)
$,
proving our main result.

\begin{theorem}
If the value of non-local game for $\frac{n}{2}$ copies of CHSH achieved by a state
$\ket{\psi^{\prime}} \in \mathcal{H}_A \otimes \mathcal{H}_B$
is at least $2 \sqrt{2} - \epsilon$ then there exist
\begin{itemize}
   \item Hermitian and unitary $X^{\prime}_k$ and $Z^{\prime}_k$, defined on $\mathcal{H}_A$ for $k \leq \frac{n}{2}$ and $\mathcal{H}_B$ for $k > \frac{n}{2}$
   \item a local isometry
      $\Phi :
         \mathcal{H}_A \otimes \mathcal{H}_B
         \rightarrow
         \mathcal{H}_A \otimes \mathcal{H}_B
         \otimes
         \mathcal{H}_2 \otimes \mathcal{H}_2
      $
   \item a state $\ket{junk} \in \mathcal{H}_A \otimes \mathcal{H}_B$
\end{itemize}
such that for any $p,q \in {\{0,1\}}^n$
\begin{equation}
   \norm{
      \Phi\left(
         X^{\prime q}
         Z^{\prime p}
         \ket{\psi^{\prime}}
      \right)
      -
      \ket{junk}
      X^{q}
      Z^{p}
      \ket{\psi}
   }
   \leq O(n^{\frac{9}{8}}\epsilon^{\frac{1}{8}})
	.
\end{equation}
where $\ket{\psi}$ is $\frac{n}{2}$ pairs of maximally entangled qubits.
\end{theorem}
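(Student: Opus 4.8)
The plan is to chain together the machinery assembled in the previous sections: the averaging-and-symmetry argument that converts a high average game value into good CHSH values for a small family of canonical questions, Lemma~\ref{lemma:questionstoselftesting} which turns those values into operators meeting the hypotheses of Corollary~\ref{cor:2partysufficientconditions}, and finally Corollary~\ref{cor:2partysufficientconditions} itself, which delivers the local isometry, the junk state, and the self-testing bound. The bulk of the proof is bookkeeping of error terms through this pipeline, so the real work lies in tracking how the $\epsilon$ of the theorem degrades at each stage.

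First I would establish that a value of at least $2\sqrt{2} - \epsilon$ forces the two hypotheses of Lemma~\ref{lemma:questionstoselftesting} to hold with $\delta = O(n\epsilon)$. The key point is that although any single question is asked with exponentially small probability, we only need \emph{existence} of good questions, which averaging supplies: since the game value is an average of $f(q_a,q_b,k)$ over $q_a$, $q_b$ and $k$, at least one $q_b$ and then one $q_a$ must be at least average, and the symmetry maps (flipping a bit on one side while relabelling the corresponding observables) let us relocate any such good pair to the canonical $0\dots 0$ question without changing the game value. Since the Cirel'son bound caps each summand at $2\sqrt{2}$, pushing all the deficit onto a single value of $k$ (condition~1), or onto the half $T$ of the questions that differ in bits $k$ and $\ell$ (condition~2), costs at most a factor of $n$, giving $\delta = O(n\epsilon)$.

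Next I would feed $\delta = O(n\epsilon)$ into Lemma~\ref{lemma:questionstoselftesting}, which produces Hermitian unitary $X^{\prime}_k$, $Z^{\prime}_k$ satisfying the hypotheses of Corollary~\ref{cor:2partysufficientconditions} with $\epsilon_1$, $\epsilon_2$, $\epsilon_3$ all of order ${(\delta\sqrt{2})}^{1/4} = O({(n\epsilon)}^{1/4})$, so that $\max\{\epsilon_1,\epsilon_2,\epsilon_3\} = O({(n\epsilon)}^{1/4})$. Applying Corollary~\ref{cor:2partysufficientconditions} then yields $\Phi$ and $\ket{junk}$ together with an error $O(n\sqrt{\max\{\epsilon_1,\epsilon_2,\epsilon_3\}})$; substituting and simplifying gives
\begin{equation}
O{\left(n \sqrt{{(n\epsilon)}^{1/4}}\right)}
= O{\left(n \cdot {(n\epsilon)}^{1/8}\right)}
= O{\left(n^{9/8}\epsilon^{1/8}\right)},
\end{equation}
which is exactly the claimed bound.

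I expect the main obstacle to be the averaging step rather than the closing computation. The subtlety is that we never pin down which questions are good — we can only argue they exist and then exploit the game's symmetries to move them onto the fixed questions that Lemma~\ref{lemma:questionstoselftesting} demands. The delicate case is condition~2, where we must simultaneously secure a question that differs in two chosen positions $k$ and $\ell$ \emph{and} achieves high CHSH value on both subtests $j = k$ and $j = \ell$, which is precisely why we partition by whether bits $k$ and $\ell$ agree and argue within the differing set $T$. Confirming that the symmetry maps genuinely preserve the game value and correctly transform the observables — so that the operators output by Lemma~\ref{lemma:questionstoselftesting} really are the ones Corollary~\ref{cor:2partysufficientconditions} consumes — is the part I would verify most carefully; once that is in place, the error accounting is routine.
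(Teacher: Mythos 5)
Your proposal is correct and follows essentially the same route as the paper: the averaging-plus-symmetry argument (relocating above-average questions to $0\dots 0$, paying a factor of $n$ via the Cirel'son cap for both condition~1 and the $T$-partition in condition~2) to get $\delta = O(n\epsilon)$, then Lemma~\ref{lemma:questionstoselftesting} to obtain $\max\{\epsilon_1,\epsilon_2,\epsilon_3\} = O({(n\epsilon)}^{1/4})$, and finally Corollary~\ref{cor:2partysufficientconditions} to conclude $O(n\sqrt{{(n\epsilon)}^{1/4}}) = O(n^{9/8}\epsilon^{1/8})$. The error bookkeeping and the subtleties you flag (value-preserving relabellings, securing a single $q_a$ good on both subtests $k$ and $\ell$) are exactly the points the paper's own argument addresses.
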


\section{Discussion}

We have shown that the CHSH game can be used for testing many pairs of maximally entangled qubits, and that the error bound scales polynomially.  This opens the possibility of using CHSH in more applications.  The two constructions allow for either a strictly parallel CHSH test, where all subtests are independent, or a construction which requires uses a logarithmic number of questions.

Coladangelo~\cite{Coladangelo:2016:Parallelselftesting} achieved a similar result to ours, allowing parallel self-testing using CHSH.\@  Their result is more general in that they considers the tilted CHSH inequality as well, which allows for testing of non-maximally entangled pairs of qubits.  Robustness is $O(n^{3/2} \epsilon^{1/2})$ as compared to our $O(n^{9/8} \epsilon^{1/8})$.  While their scaling in $\epsilon$ is better, ours is better in the scaling in $n$.

The parallel CHSH self-test could be used in device independent quantum key distribution to allow for parallel protocols.  Device independent quantum key distribution has been achieved for serial repetition~\cite{Vazirani:2014:Fullydeviceindependent} but not for parallel repetition.  The parallel self-test variant with a logarithmic number of questions might be useful in device independent quantum random number generators, where the smaller number of questions could reduce the about of seed randomness required.  See~\cite{Miller:2014:Robustprotocolssecurely} for an example of a quantum random number generator which utilises self-testing.  Finally, parallel self-tests could simplify protocols for interactive proofs, allowing for a smaller number of rounds of communication as compared to serial repetition for constructions as in~\cite{Reichardt2013:Classicalcommandofquantumsystems}, or a reduced number of provers for constructions as in~\cite{McKague:2013:Interactive-proofs-for-BQP-via-self-tested-graph-states} or~\cite{Hajdusek::Device-IndependentVerifiableBlindQuantumComputation}.

\begin{acknowledgements}
	  This work is partially funded by the Dodd-Walls Centre for Photonic and Quantum Technologies.
\end{acknowledgements}

\let\OLDthebibliography\thebibliography
\renewcommand\thebibliography[1]{
  \OLDthebibliography{#1}
  \setlength{\parskip}{0pt}
  \setlength{\itemsep}{0pt plus 1.6ex}
}

\bibliographystyle{halphamm}
\renewcommand{\doi}[1]{\href{https://doi.org/\detokenize{#1}}{DOI: \detokenize{#1}}}

\bibliography{Global_Bibliography}

\newcommand{\etalchar}[1]{$^{#1}$}
\begin{thebibliography}{MMMO06}
 \providecommand{\doi}[1]{{\sc doi}:\href{http://dx.doi.org/#1}{#1}}
 \providecommand{\urlprefix}{{\sc url} }
 \providecommand{\eprintprefix}{{\sc eprint} }

\bibitem[BLM{\etalchar{+}}09]{Bardyn:2009:Device-independ}
Charles-Edwourd Bardyn, Timothy C.~H. Liew, Serge Massar, Matthew McKague, and
  Valerio Scarani.
\newblock Device-independent state estimation based on {B}ell's inequalities.
\newblock {\em Physical Review A (Atomic, Molecular, and Optical Physics)},
  {\bf 80}(6):062327, 2009.
\newblock \doi{10.1103/PhysRevA.80.062327}.
\newblock \eprintprefix\href{http://arxiv.org/abs/0907.2170}{arXiv:0907.2170}.

\bibitem[CHSH69]{Clauser:1969:Proposed-Experi}
John~F. Clauser, Michael~A. Horne, Abner Shimony, and Richard~A. Holt.
\newblock Proposed experiment to test local hidden-variable theories.
\newblock {\em Phys. Rev. Lett.}, {\bf 23}(15):880--884, Oct 1969.
\newblock \doi{10.1103/PhysRevLett.23.880}.

\bibitem[Cir80]{Cirelson:1980:Quantum-general}
B.~S. Cirel'son.
\newblock Quantum generalizations of {B}ell's inequality.
\newblock {\em Letters in Mathematical Physics}, {\bf 4}(2):93--100, 03 1980.
\newblock \doi{10.1007/BF00417500}.

\bibitem[Col16]{Coladangelo:2016:Parallelselftesting}
Andrea~W. Coladangelo.
\newblock Parallel self-testing of (tilted) {EPR} pairs via copies of (tilted)
  {CHSH}.
\newblock 2016.
\newblock
  \eprintprefix\href{https://arxiv.org/abs/1609.03687}{arXiv:1609.03687}.

\bibitem[CSUU08]{Cleve:2008:Perfect-Paralle}
Richard Cleve, William Slofstra, Falk Unger, and Sarvagya Upadhyay.
\newblock Perfect parallel repetition theorem for quantum {XOR} proof systems.
\newblock {\em Computational Complexity}, {\bf 17}(2):282--299, 2008.
\newblock \doi{10.1007/s00037-008-0250-4}.
\newblock
  \eprintprefix\href{http://arxiv.org/abs/quant-ph/0608146v2}{arXiv:quant-ph/0608146v2}.

\bibitem[HPDF15]{Hajdusek::Device-IndependentVerifiableBlindQuantumComputation}
Michal Hajdusek, Carlos~A. Perez-Delgado, and Joseph~F. Fitzsimons.
\newblock Device-independent verifiable blind quantum computation.
\newblock 2015.
\newblock \eprintprefix\href{http://arxiv.org/abs/1502.02563}{arXiv:1502.02563
  }.

\bibitem[McK13]{McKague:2013:Interactive-proofs-for-BQP-via-self-tested-graph-states}
Matthew McKague.
\newblock Interactive proofs for {BQP} via self-tested graph states.
\newblock {\em Theory of Computing}, {\bf 12}(3):1--42, 2013.
\newblock \doi{10.4086/toc.2016.v012a003}.
\newblock \eprintprefix\href{http://arxiv.org/abs/1309.5675}{arxiv:1309.5675}.

\bibitem[McK16]{McKague:2015:Selftestingin}
Matthew McKague.
\newblock Self-testing in parallel.
\newblock {\em New Journal of Physics}, {\bf 18}(4):045013, 2016.
\newblock \doi{10.1088/1367-2630/18/4/045013}.
\newblock
  \eprintprefix\href{http://arxiv.org/abs/1511.04194}{arXiv:1511.04194}.

\bibitem[MMMO06]{Magniez:2006:Self-testing-of}
Fr{\'e}d{\'e}ric Magniez, Dominic Mayers, Michele Mosca, and Harold Ollivier.
\newblock Self-testing of quantum circuits.
\newblock In M~Bugliesi et~al., editor, {\em Proceedings of the 33rd
  International Colloquium on Automata, Languages and Programming}, number 4052
  in Lecture Notes in Computer Science, pp. 72--83, 2006.
\newblock \doi{10.1007/11786986\_8}.
\newblock
  \eprintprefix\href{http://arxiv.org/abs/quant-ph/0512111}{arXiv:quant-ph/0512111v1
  }.

\bibitem[MS16]{Miller:2014:Robustprotocolssecurely}
Carl~A. Miller and Yaoyun Shi.
\newblock Robust protocols for securely expanding randomness and distributing
  keys using untrusted quantum devices.
\newblock {\em J. ACM}, {\bf 63}(4):33:1--33:63, October 2016.
\newblock \doi{10.1145/2885493}.
\newblock \eprintprefix\href{http://arxiv.org/abs/1402.0489}{arXiv:1402.0489}.

\bibitem[MY04]{Mayers:2004:Self-testing-qu}
Dominic Mayers and Andrew Yao.
\newblock Self testing quantum apparatus.
\newblock {\em Quantum Information and Computation}, {\bf 4}(4):273--286, July
  2004.
\newblock
  \eprintprefix\href{http://arxiv.org/abs/quant-ph/0307205}{arXiv:quant-ph/0307205}.

\bibitem[MYS12]{McKague:2012:Robust-self-tes}
Matthew McKague, Tzyh~Haur Yang, and Valerio Scarani.
\newblock Robust self-testing of the singlet.
\newblock {\em Journal of Physics A: Mathematical and Theoretical}, {\bf
  45}(45):455304, 2012.
\newblock \doi{10.1088/1751-8113/45/45/455304}.
\newblock \eprintprefix\href{http://arxiv.org/abs/1203.2976}{arXiv:1203.2976}.

\bibitem[PR92]{Popescu:1992:Which-states-vi}
Sandu Popescu and Daniel Rohrlich.
\newblock Which states violate {B}ell's inequality maximally?
\newblock {\em Physics Letters A}, {\bf 169}(6):411 -- 414, 1992.
\newblock \doi{10.1016/0375-9601(92)90819-8}.

\bibitem[RUV13]{Reichardt2013:Classicalcommandofquantumsystems}
Ben~W. Reichardt, Falk Unger, and Umesh Vazirani.
\newblock Classical command of quantum systems.
\newblock {\em Nature}, {\bf 496}(7446):456--460, 04 2013.
\newblock \doi{10.1038/nature12035}.

\bibitem[VV14]{Vazirani:2014:Fullydeviceindependent}
Umesh Vazirani and Thomas Vidick.
\newblock Fully device-independent quantum key distribution.
\newblock {\em Physical review letters}, {\bf 113}(14):140501, 2014.
\newblock \doi{10.1103/PhysRevLett.113.140501}.

\bibitem[WBMS16]{Wu2015:Deviceindependentself}
Xingyao Wu, Jean-Daniel Bancal, Matthew McKague, and Valerio Scarani.
\newblock Device-independent parallel self-testing of two singlets.
\newblock {\em Phys. Rev. A}, {\bf 93}:062121, Jun 2016.
\newblock \doi{10.1103/PhysRevA.93.062121}.
\newblock
  \eprintprefix\href{http://arxiv.org/abs/1512.02074}{arXiv:1512.02074}.

\end{thebibliography}

\end{document}